\definecolor{darkgreen}{rgb}{0.0,0,0.9}
\newtheorem{theorem}{Theorem}[section]
\newtheorem{lemma}{Lemma}[section]
\newcommand{\R}{\mathcal{R}}
\newcommand{\D}{\mathcal{D}}
\newcommand{\F}{\mathcal{F}}
\newcommand{\W}{\mathsf{W[1]}}
\newcommand{\fpt}{\mathsf{FPT}}
\newcommand{\np}{\mathsf{NP}}
\newcommand{\tip}{{2-interval pattern problem}\xspace}
\newcommand{\nst}{\sqsubset}
\newcommand{\btw}{\between}
\title{Parameterized Complexity of Two-Interval Pattern Problem\thanks{This work is supported in part by Natural Sciences and Engineering Research Council of Canada (NSERC).}}
\author[1]{Prosenjit Bose}
\author[1]{Saeed Mehrabi}
\author[2]{Debajyoti Mondal}
\affil[1]{{\small School of Computer Science, Carleton University, Ottawa, Canada.

\texttt{jit@scs.carleton.ca, saeed.mehrabi@carleton.ca}}
}
\affil[1]{{\small Department of Computer Science, University of Saskatchewan, Saskatoon, Canada.

\texttt{d.mondal@usask.ca}}
}
\date{}
\begin{document}

\maketitle

\begin{abstract}
A \emph{2-interval} is the union of two disjoint intervals on the real line. Two 2-intervals $D_1$ and $D_2$ are \emph{disjoint} if their intersection is empty (i.e., no interval of $D_1$ intersects any interval of $D_2$). There can be three different relations between two disjoint 2-intervals; namely, preceding ($<$), nested ($\nst$) and crossing ($\btw$). Two 2-intervals $D_1$ and $D_2$ are called \emph{$R$-comparable} for some $R\in\{<,\nst,\btw\}$, if either $D_1RD_2$ or $D_2RD_1$. A set $\D$ of disjoint 2-intervals is $\R$-comparable, for some $\R\subseteq\{<,\nst,\btw\}$ and $\R\neq\emptyset$, if every pair of 2-intervals in $\R$ are $R$-comparable for some $R\in\R$. Given a set of 2-intervals and some $\R\subseteq\{<,\nst,\btw\}$, the objective of the \emph{\tip} is to find a largest subset of 2-intervals that is $\R$-comparable.

The \tip is known to be $\W$-hard when $|\R|=3$ and $\np$-hard when $|\R|=2$ (except for $\R=\{<,\nst\}$, which is solvable in quadratic time). In this paper, we fully settle the parameterized complexity of the problem by showing it to be  $\W$-hard for both $\R=\{\nst,\btw\}$ and $\R=\{<,\btw\}$ (when parameterized by the size of an optimal solution); this answers the open question posed by Vialette [Encyclopedia of Algorithms, 2008].
\end{abstract}

\section{Introduction}
\label{sec:introduction}
Interval graphs and their generalizations are often used to study problems in resource allocation, scheduling, and DNA mapping. In 2002, Vialette~\cite{DBLP:conf/cpm/Vialette02} proposed a geometric description of RNA 
helices in an attempt to improve the  understanding of the computational complexity  for finding structured patterns in RNA sequences. In particular, Vialette modeled the RNA secondary structure using a set of 2-intervals, which inspired subsequent research (e.g., see~\cite{DBLP:reference/algo/Vialette08}) on examining the properties of the geometric graphs arising from such representations. The \emph{\tip}, introduced by Vialette~\cite{DBLP:journals/tcs/Vialette04}, is a widely studied pattern, and the main topic of this paper.

A \emph{2-interval} is the union of two disjoint intervals on the real line. Two 2-intervals $D_1$ and $D_2$ are \emph{disjoint} if their intersection is empty; that is, no interval of $D_1$ intersects any interval of $D_2$. We can define three different relations between two disjoint 2-intervals: one 2-interval lies entirely to the left of the other one (called \emph{preceding} and denoted by $<$), one 2-interval is nested within the other one (called \emph{nested} and denoted by $\nst$), and the intervals of the two 2-intervals alternate on the real line (called \emph{crossing} and denoted by $\btw$). See Figure~\ref{fig:exampleForRelations}(a) for an example; a formal definition is given in Section~\ref{sec:prelimins}. Two 2-intervals $D_1$ and $D_2$ are \emph{$R$-comparable} for some $R\in\{<,\nst,\btw\}$ if either $D_1RD_2$ or $D_2RD_1$. A set $\D$ of disjoint 2-intervals is $\R$-comparable, for some $\R\subseteq\{<,\nst,\btw\}$ and $\R\neq\emptyset$, if every pair of 2-intervals in $\R$ are $R$-comparable for some $R\in\R$. In the \emph{\tip}, we are given a set of 2-intervals and a set $\R\subseteq\{<,\nst,\btw\}$, and the objective is to compute a largest subset of 2-intervals that is $\R$-comparable.   Figure~\ref{fig:exampleForRelations}(b) illustrates such an example.

The \tip can model various scenarios in the context of RNA structure prediction. While looking for certain RNA structures, some common approaches to cope with intractability are either to restrict the class of pseudoknots~\cite{RivasRE99} or to  apply heuristics~\cite{10.1093/nar/28.4.991,RenRCH05,DBLP:journals/bioinformatics/RuanSZ04}. Vialette~\cite{DBLP:journals/tcs/Vialette04}   proposed that one can obtain a relevant set of 2-intervals from an RNA sequence by selecting stable stems, e.g., using a simplified thermodynamic model without accounting for loop energy~\cite{DBLP:journals/bioinformatics/RuanSZ04,DBLP:journals/tcs/Vialette04,DBLP:conf/wabi/ZhaoMC06}. Then, the prediction of the RNA structure is equivalent to finding a maximum subset of non-conflicting (i.e., disjoint) 2-intervals.

\begin{figure}[t]
	\includegraphics[width=.9\textwidth]{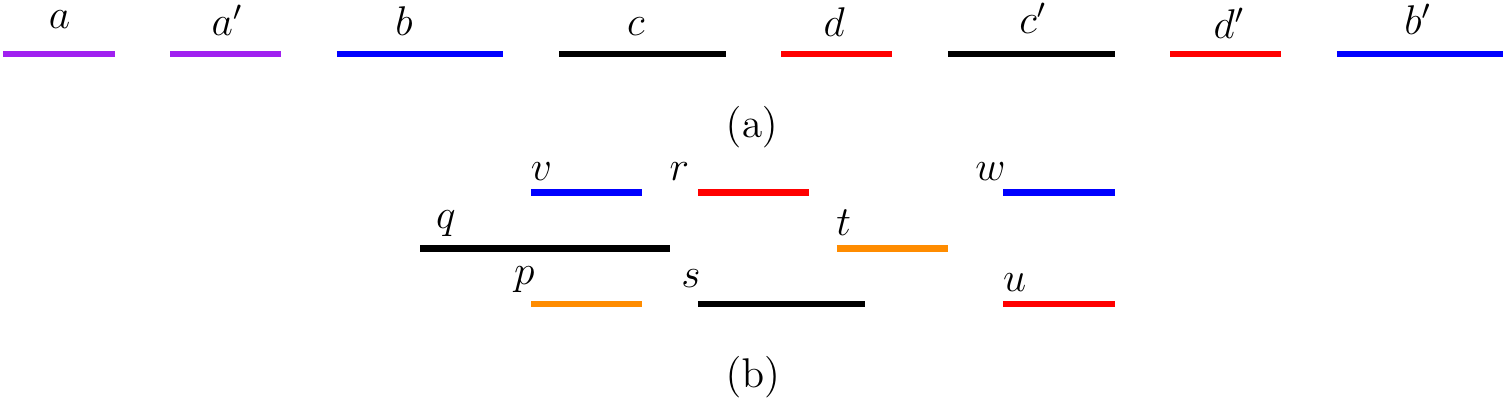}
	\caption{(a) An example for showing the three possible relations between a pair of 2-intervals; here, the same-colour intervals form a 2-interval. Then, $(a,a')<(d,d')$, $(d,d')\nst (b,b')$ and $(c,c')\btw (d,d')$. (b) An instance of \tip with $\R=\{<,\btw\}$ and the 2-intervals are $\{(u,w),(q,s),(r,u),(p,t)\}$. The 2-intervals $\{(p,t),(r,u)\}$ form a  largest subset that is $\R$-comparable.}
	\label{fig:exampleForRelations}
\end{figure}

\paragraph{Related work.} Vialette~\cite{DBLP:journals/tcs/Vialette04} observed that if $|\R|=1$, then the \tip is polynomial-time solvable by reductions to the maximum independent set  problem on interval graphs, or to the maximum clique problem on comparability graphs. The running time of these algorithms have been improved since then, and expressed in terms of the number of input 2-intervals and various  interval-related parameters such as their lengths or overlap~\cite{DBLP:journals/jco/ChenYY07}. For the case when $|\R|$ = 2, the problem is  solvable in polynomial time when $\R=\{<,\nst\}$~\cite{DBLP:journals/tcs/Vialette04}. However, if $\R=\{<,\btw\}$ or  $\R=\{\nst,\btw\}$, then the problem is known to be  $\np$-hard, even if the intervals of every 2-interval have unit length~\cite{DBLP:journals/tcs/Vialette04,DBLP:conf/cpm/BlinFV04}. If $|\R|=3$, i.e., $\R=\{<,\nst,\btw\}$, then the $\np$-hardness of the problem follows from the hardness of recognizing 2-interval graphs~\cite{DBLP:journals/dam/WestS84}.

The approximability of the $\np$-hard models of the \tip was studied by Crochemore et al.~\cite{DBLP:journals/tcs/CrochemoreHLRV08}. They gave polynomial-time algorithms for the problem with approximation factors 4 when $\R=\{<,\nst,\btw\}$ or $\R=\{\nst,\btw\}$, and 6 when $\R=\{<,\nst\}$. They also showed that the results hold for the weighted case, i.e., when each 2-interval is associated with a weight and the goal is to find a maximum weight subset. These factors are improved to 3 when the intervals of every input 2-interval have unit length~\cite{DBLP:journals/tcs/CrochemoreHLRV08}, where they also considered the case when the 2-intervals are weighted. For $\R=\{<,\btw\}$ (and arbitrary input 2-intervals), Jiang~\cite{DBLP:journals/jco/Jiang07} improved the approximation factor to 2 and subsequently to $1+\epsilon$ for any $\epsilon>0$~\cite{DBLP:conf/cocoa/Jiang07}.

The problem is $\W$-hard when $\R=\{<,\nst,\btw\}$, because in this case, the problem is equivalent to computing a maximum independent set on 2-interval graphs, and the latter is known to be $\W$-hard~\cite{DBLP:journals/tcs/FellowsHRV09}; see Section~\ref{sec:hardness} for more details. For $|\R| = 2$, to the best of our knowledge, the only parameterized result is the work of Crochemore et al.~\cite{DBLP:conf/cpm/BlinFV04}  who proved that the problem is fixed-parameter tractable, but only when $\R=\{\nst,\btw\}$, the input intervals all have unit length and the tractability is with respect to the \emph{forward crossing number}: the maximum number of 2-intervals that cross a 2-interval ``from the right''.

\paragraph{Our results.} In this paper, we answer a question of Vialette~\cite{DBLP:reference/algo/Vialette08} by proving that the \tip is $\W$-hard when $\R=\{\nst,\btw\}$ and $\R=\{<,\btw\}$. Our $\W$-hardness result is inspired by the reduction of the $k$-independent set problem used by Fellows et al.~\cite{DBLP:journals/tcs/FellowsHRV09}. Their reduction requires all three relations (i.e., they prove the $\W$-hardness of the \tip when $\R = \{<, \btw, \nst\}$). Prior to our work, it was known that the complexity of the problem is polynomial when $\R=\{<, \nst\}$~\cite{DBLP:journals/jco/ChenYY07}, but it was unknown whether the problem is fixed-parameter tractable (when parameterized by the size of an optimal solution) or it is $\W$-hard for $\R=\{\nst,\btw\}$ and $\R=\{<,\btw\}$. Hence, our $\W$-hardness result fully settles the parameterized complexity of the \tip.

\section{Preliminaries}
\label{sec:prelimins}
In this section, we give some definitions and notation that will be used throughout the paper.

A 2-interval $D$ is the union of two disjoint intervals on the real line; that is, $D=(A,B)$ and the interval $A$ lies to the left of the interval $B$. For a pair of disjoint intervals $I,J$, we write $I<J$ when $I$ is to the left of $J$. Two 2-intervals $D_i$ and $D_j$ are \emph{disjoint} if $(I_i\cup J_i)\cap (I_j\cup J_j)=\emptyset$. Moreover, for two disjoint 2-intervals $D_i$ and $D_j$, we say that $D_i$ is \emph{preceding} (resp., \emph{nested} in, \emph{crossing}) $D_j$ if $I_i<J_i<I_j<J_j$ (resp., $I_j<I_i<J_i<J_j$, $I_i<I_j<J_i<J_j$. We write $D_i<D_j$ (resp., $D_i\nst D_j$, $D_i\btw D_j$) when $D_i$ is preceding (resp., nested in, crossing) $D_j$.

We say that two 2-intervals $D_i$ and $D_j$ are \emph{$R$-comparable}, for some $R\in\{<,\nst,\btw\}$, if (i) $D_i$ and $D_j$ are disjoint and (ii) either $D_i R D_j$ or $D_j R D_i$. Let $S$ be a set of $n$ 2-intervals on the real line, and let $\R\subseteq\{<,\nst,\btw\}$ such that $\R\neq\emptyset$. Then, a set $\D\subseteq S$ is called \emph{$\R$-comparable} if every pair of 2-intervals in $\D$ are $R$-comparable for some $R\in\R$. Given $S$ and some $\R\in\{<,\nst,\btw\}$, the objective of the \tip is to compute a largest subset $\D\subseteq S$ such that $\D$ is $\R$-comparable.

Given a graph $G$ and a parameter $k$, the $k$-independent set problem asks whether there is an independent set of size $k$ in $G$. Fellows et al.~\cite{DBLP:journals/tcs/FellowsHRV09} proved that the $k$-independent set problem is $\W$-hard on 2-interval graphs when $\R = \{<, \btw, \nst\}$. Our $\W$-hardness results are also based on showing reductions from the \emph{$k$-Multicoloured Clique Problem}, which is known to be $\W$-hard~\cite{DBLP:journals/tcs/FellowsHRV09}. The problem is defined as follows.

\begin{framed}
\noindent {\bf Problem:} $k$-Multicoloured Clique.

\noindent {\bf Input:} A graph $G$, and a vertex-colouring $c:V(G)\rightarrow\{1,2,\dots,k\}$ for $G$.

\noindent {\bf Question:} Is there a clique of size $k$ in $G$ such that, for each $c\in\{1,2,\dots,k\}$, there is exactly one vertex in the clique that has colour $c$?
\end{framed}

\section{$\W$-Hardness}
\label{sec:hardness}
In this section, we prove that the \tip is $\W$-hard when $\R=\{\nst,\btw\}$ and $\R=\{<,\btw\}$. Our reduction is inspired by that of Fellows et al.~\cite{DBLP:journals/tcs/FellowsHRV09}. Let $(G,c,k)$ be an instance of the $k$-multicoloured clique problem (we assume w.l.o.g. for our purposes that $c$ is a proper colouring\footnote{Otherwise, one can remove the edges whose end vertices are coloured with the same colour.}). We construct a set $\F$ of 2-intervals such that $G$ has a multicoloured clique of size $k$ if and only if $\F$ contains a set of $k'=2k+4{k\choose 2}$ disjoint 2-intervals that are pairwise comparable in one of the relations in $\R$; the value of $k'$ will be clear from our construction. We first describe an outline of the construction and the corresponding gadgets. Then, we give the details on how to organize the gadgets on the real line specific to each of the sets $\R=\{\nst,\btw\}$ and $\R=\{<,\btw\}$. For a colour $i\in\{1,2,\dots,k\}$, let $V_i(G)$ denote the set of vertices of $G$ that have colour $i$. Moreover, for every distinct pair of colours $i,j$, let $E_{(i,j)}(G)$ denote the set of edges $(u,v)$ of $G$ such that $\{c(u),c(v)\}=\{i,j\}$. That is, $E_{(i,j)}(G)$ consists of all the edges whose end vertices are coloured with two distinct colours $i$ and $j$.

\paragraph{Outline.} The construction consists of two main types of gadgets: \emph{selection} and \emph{validation}. By selection gadgets, we ensure that 2-intervals representing $k$ vertices with distinct colours and ${k\choose 2}$ edges with distinct pairs of colours are selected. By validation gadgets, we ensure that the selected set of 2-intervals are valid in the sense that the $k$ selected vertices are actually adjacent in the graph and the selected edges are indeed over the selected set of vertices. We group the 2-intervals corresponding to vertices of the same colour together in a \emph{vertex-selection} gadget in such a way that any feasible solution for the \tip will have 2-intervals corresponding to one vertex per vertex-selection gadget. Similarly, we group the 2-intervals corresponding to edges with the same pairs of distinct colours $\{i,j\}$ together in a \emph{edge-selection} gadget such that any feasible solution for the \tip will have 2-intervals corresponding to one edge $(u,v)$ with $\{c(u),c(v)\}=\{i,j\}$. We will then organize the gadgets on the real line in such a way that any feasible solution will contain 2-intervals that are  
$\R$-comparable. 

Given $(G,c,k)$, we associate one 2-interval $I_v$ for each vertex $v\in V(G)$. Moreover, we associate four 2-intervals for each edge $(u,v)\in E(G)$: two 2-intervals $I_{(u,v)}$ and $I_{(v,u)}$ for each ``direction'' of the edge and two 2-intervals $I_{\{u,v\}}$ and $I'_{\{u,v\}}$ that are undirected. The 2-intervals for ``directed'' edges will be used for validation, and we will show below how they are constructed. Therefore, the number of 2-intervals of the constructed instance will be $|V(G)|+4|E(G)|$. We next give the details of each type of gadgets.

\paragraph{Vertex-selection gadget.} For each colour $c\in\{1,2,\dots,k\}$, we construct a vertex-selection gadget. The gadget has two components, which we denote  by $I_1(c)$ and $I_2(c)$; see Figure~\ref{fig:vertexGadget} for an illustration. The component $I_1(c)$ has $|V_c(G)|$ ``rows'' of intervals, each of which has $(k+1)$ ``columns''; each row corresponds to a vertex of $G$ with colour $c$. The intervals in the same column pairwise intersect. Moreover, for the intervals in a fixed column $j$, we assign an offset such that each interval in row $i>1$ intersects the interval that is in column $j+1$ and row $i-1$; see Figure~\ref{fig:vertexGadget}. The component $I_2(c)$ consists of two columns of intervals, and each columns has $|V_c(G)|$ rows. Here, we assign an offset such that the interval in the first column and row $i$ intersects the interval in the second column and row $i+1$ (see Figure~\ref{fig:vertexGadget}). 

For each vertex $v\in V_c(G)$, we associate two 2-intervals $I_v$ and $I'_v$ as follows. The first (resp., second) 2-interval $I_v$ (resp., $I'_v$) is composed of the interval in the first (resp., last) column of $I_1(c)$ that corresponds to $v$ and the interval in the first (resp., second) column of $I_2(c)$ that corresponds to $v$. These 2-intervals are illustrated with dashed lines in Figure~\ref{fig:vertexGadget}. Each of the remaining $k$ columns in $I_1(c)$ corresponds to a colour in   $\{1,2,\dots,k\} \setminus\{c\}$.
 These $|V_c(G)|\times (k-1)$ intervals are later paired with intervals from edge-selection gadgets to form 2-intervals that correspond to ``directed'' edges. Notice that the intervals of the first column of $I_1(c)$ pairwise intersect, ensuring that at most one 2-interval corresponding to a vertex with colour $c$ can appear in any feasible solution. Similarly, for the $|V_c(G)|\times (k-1)$ intermediate intervals of $I_1(c)$ (i.e., the intervals of $I_1(c)$ excluding those in the first and last column), it means that all the edges of a $k$-multicoloured clique with at least one endpoint with colour $c$ are incident to the same vertex in $V_c(G)$. 

\begin{figure}[t]
    \centering
	\includegraphics[width=\textwidth]{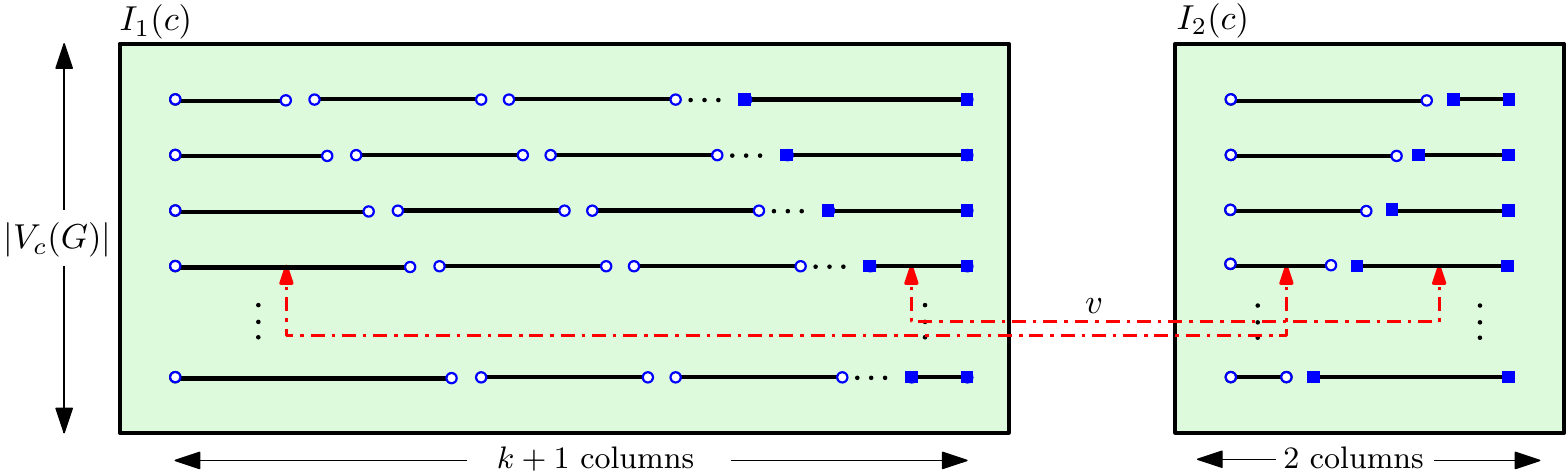}
	\caption{A vertex-selection gadget and the two 2-intervals $I_v$ and $I'_v$ corresponding to a vertex $v$.}
	\label{fig:vertexGadget}
\end{figure}

\begin{lemma}
\label{lem:vertexEnsuring}
Let $S$ be feasible a solution for the \tip, and consider the vertex-selection gadget $T$ corresponding to colour $c$. Moreover, let $M\subseteq S$ be the set of 2-intervals such that each 2-interval in $M$ has at least one interval in $T$. If $|M|\geq k+1$, then all the intervals in $M\cap T$ are selected from the same row of $T$.
\end{lemma}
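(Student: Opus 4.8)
The plan is to argue structurally about how a feasible ($\R$-comparable) solution can intersect the two components $I_1(c)$ and $I_2(c)$. First I would recall the key geometric features of the gadget $T$: within $I_1(c)$, the intervals of a fixed column pairwise intersect, so a feasible solution can pick at most one interval per column of $I_1(c)$; since $I_1(c)$ has $k+1$ columns, this contributes at most $k+1$ intervals, and likewise the two columns of $I_2(c)$ each contribute at most one, but an interval of $I_2(c)$ is always the partner of an interval of $I_1(c)$ in the same 2-interval (for the columns realizing $I_v,I'_v$) or—more importantly for the counting—the $2$-intervals counted in $M$ each have \emph{at least one} interval in $T$. So from $|M|\ge k+1$ I would first derive that $M$ uses essentially one interval from each of the $k+1$ columns of $I_1(c)$ (after accounting for the at most two intervals that could instead sit in $I_2(c)$, which forces their $I_1(c)$-partners anyway), i.e. $|M\cap T|$ is spread across all $k+1$ columns of $I_1(c)$ with exactly one interval per column.

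The heart of the argument is then the offset/diagonal structure. Suppose two intervals chosen by $M$ in $I_1(c)$ lie in columns $j_1<j_2$ and in rows $r_1\ne r_2$. I would show that the $2$-intervals containing these two intervals cannot be $R$-comparable for any $R\in\{<,\nst,\btw\}$, by examining how the left endpoints (the $I_1(c)$-intervals) and the right endpoints (their partners, which all lie in $I_2(c)$ or in edge-selection gadgets to the right) interleave. Because column $j$ is offset so that the interval in row $i$ meets the interval in column $j+1$, row $i-1$, the intervals on two distinct rows in two distinct columns are arranged so that one of them ``starts inside'' the other but ``ends outside'' it — i.e. their left-endpoint intervals in $I_1(c)$ are in a crossing-like configuration while the right-endpoint partners are ordered in a way incompatible with any single relation. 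A clean way to phrase this: pick the two $I_1(c)$-intervals; by the diagonal offsets they partially overlap (hence the two $2$-intervals are not disjoint) \emph{unless} $|r_1-r_2|$ and $|j_1-j_2|$ are large enough, and in that non-overlapping case one checks directly that the induced order of the four endpoints is neither $<$, nor $\nst$, nor $\btw$. Since a feasible solution is $\R$-comparable and in particular pairwise \emph{disjoint}, any two of its $2$-intervals touching $T$ must in fact avoid overlapping in $I_1(c)$, which by the column-intersection property already forces distinct columns; the diagonal offsets then force the rows to agree.

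Concretely, I would carry out the steps as: (1) at most one interval of $M$ per column of $I_1(c)$ and per column of $I_2(c)$, so $|M\cap T|\le k+3$; combined with $|M|\ge k+1$ and the fact that every $2$-interval of $M$ contributes $\ge 1$ interval to $M\cap T$, deduce that $M$ hits every column of $I_1(c)$. (2) Take any two intervals of $M$ in $I_1(c)$, in columns $j_1<j_2$; since the solution is pairwise disjoint and same-column intervals intersect, they are in different columns — fine — but I must rule out different rows. (3) Show that if their rows differ then either the two intervals overlap (contradicting disjointness of the two $2$-intervals), or their partners create a forbidden pattern; conclude the rows coincide. (4) Propagate: a common row between the column-$1$ interval and column-$j$ interval for every $j$ means all of $M\cap T$ lies in a single row. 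The main obstacle I anticipate is Step (3): pinning down exactly which pairs $(j_1,j_2,r_1,r_2)$ give overlap versus a forbidden endpoint order requires reading off the precise offsets from the construction in Figure~\ref{fig:vertexGadget} and doing a small but careful case analysis on the relative positions of the four endpoints (two in $I_1(c)$, two partners). I would handle it by reducing to the two extreme sub-cases — ``rows differ by one, columns adjacent'' (forces overlap) and ``rows differ, columns far apart'' (forces a non-comparable configuration with the right-side partners) — and noting all other cases interpolate between these.
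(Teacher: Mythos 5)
Your step (3) is where the plan breaks. You claim that any two intervals of $M$ lying in distinct columns $j_1<j_2$ and distinct rows of $I_1(c)$ force either an overlap or a four-endpoint order that is neither $<$ nor $\nst$ nor $\btw$. That claim is false: the gadgets are arranged on the line precisely so that comparability is essentially automatic, and the only constraints a feasible solution must respect inside a gadget are intersection (disjointness) constraints. For example, two ``directed-edge'' 2-intervals whose left intervals sit in two different intermediate columns and different rows of $I_1(c)$, and whose right intervals sit in two different edge-selection gadgets, are disjoint and perfectly $\R$-comparable; even for the extreme columns, $I_v$ (row $i$) and $I'_{v'}$ (row $i'$ with $i'$ on the non-intersecting side of the $I_2(c)$ offset) are disjoint and comparable. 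So no pairwise case analysis on the four endpoints, however careful, can yield the lemma; the obstruction to mixing rows is not pairwise.

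The paper's argument is global and runs through $I_2(c)$, which your plan uses only for counting. First, since every 2-interval meeting $T$ has an interval in $I_1(c)$ (the $I_2(c)$ intervals occur only inside $I_v,I'_v$), $|M|\ge k+1$ forces $|M|=k+1$ with exactly one interval in each of the $k+1$ columns of $I_1(c)$ and each column of $I_2(c)$; this is roughly your step (1), though your bound $|M\cap T|\le k+3$ should be tightened to $|M|\le k+1$. Then: the selected column-1 interval of $I_1(c)$, say in row $i$, forces its partner in column 1 of $I_2(c)$ at row $i$; the offset inside $I_2(c)$ forces the selected column-2 interval of $I_2(c)$ to a row $i'\le i$, and its partner is the last-column interval of $I_1(c)$ in row $i'$; finally, if $i'<i$, the diagonal offsets between consecutive columns of $I_1(c)$ make it impossible to choose one pairwise-disjoint interval from each of the $k-1$ intermediate columns, so $i'=i$ and all intermediate intervals are pinned to row $i$. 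The two ingredients you are missing are exactly these: the first and last columns of $I_1(c)$ are tied to each other only through the $I_2(c)$ offset (the paper notes $I_2(c)$ exists for this purpose), and the intermediate columns are forced only because every column must be hit, so the ``squeeze'' between row $i$ at the left end and row $i'$ at the right end has no room. Neither fact is visible to a purely pairwise disjointness-plus-comparability check.
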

\begin{proof}
Since there are $(k+1)$ columns in the component $I_1(c)$ of $T$,  $M$ cannot have more than $(k+1)$ 2-intervals, where each containing at least one interval from $T$. Hence, $|M|=k+1$, 
 This means that $M$ must contain exactly one interval from every column of $I_1(c)$ and  hence, one from every column of $I_2(c)$. Consider the interval in the first column of $I_1(c)$ (that is in $M$) and assume that this interval is in row $i$, for some $1\leq i\leq |V_c(G)|$; it corresponds to a vertex $v\in V_c(G)$. We now show that every other interval in $M\cap T$ must also be in row $i$. Since $M\subseteq S$ and $S$ is a feasible solution, then the interval in the first column of $I_2(c)$ and row $i$ must also be in $M$ because these two intervals form one of the two 2-intervals corresponding to $v$. Now, suppose that the interval in $M$ that is from the second column of $I_2(c)$ is in row $i'$. Clearly, $i'\leq i$  (i.e., $i'$ lies below $i$) because otherwise the interval of $M$ that is in the first column of $I_2(c)$ would intersect this interval due to the offset. Since $M\subseteq S$ and $S$ is a feasible solution, $M$ must contain the interval in the last column of $I_1(c)$ that is in row $i'$ (as   only these two would form a valid 2-interval  while considering $I_2(c)$). If $i'<i$, then it is not possible to have exactly one interval from column $j$ of $I_1(c)$ in $M$ for all $j=2,3,\dots,k$ because the offset would imply that at least two intervals must intersect in $M$. Therefore, $i'=i$. In the same way, we can show that the subsequent intervals 
 of $M\cap T$ must also be in row $i$.
\end{proof}

Observe that the assignment of two 2-intervals for each vertex $v\in V_c(G)$ and placement of their second intervals in $I_2(c)$ with an offset allowed us to argue that the remaining intervals are also selected from the same row of the vertex-selection gadget. We will use a similar construction to argue the same for edge-selection gadgets. Before we continue, one might wonder why we needed $I_2(c)$ and why could not we have only $I_1(c)$ with one 2-interval for each vertex. Although this would force the selection of remaining intervals from the same row, it is impossible to place such a gadget on the real line while maintaining $\R$-comparability. To ensure $\R$-comparability, we will need to place $I_1(c)$ and $I_2(c)$ on different parts of the real line, possibly far apart from each other.

\paragraph{Edge-selection gadget.} For each distinct pair of colours $(i,j)$, we construct an edge-selection gadget. The gadget has two main components, which we denote by $C_1(i,j)$ and $C_2(i,j)$. The component $C_1(i,j)$ has $|E_{(i,j)}|$ rows of intervals each of which corresponds to an edge $(u,v)$ of $G$ such that $\{c(u),c(v)\}=\{i,j\}$; see Figure~\ref{fig:edgeGadget}(a). Each row has four columns of intervals; the intervals in the same column pairwise intersect. Moreover, there is an offset such that an interval in column $t$ intersects the interval in column $t+1$ that is in the row immediately above it. The component $C_2(i,j)$ has $|E_{(i,j)}|$ rows and only two columns. There is also an offset between the intervals similar to the offset defined for the intervals in $C_1(i,j)$; see Figure~\ref{fig:edgeGadget}(b). The row $r$ in $C_1(i,j)$ corresponds to an edge $(u,v)\in E(G)$ if and only if the row $r$ in $C_2(i,j)$ corresponds to the edge $(u,v)\in E(G)$.

\begin{figure}[t]
    \centering
	\includegraphics[width=.85\textwidth]{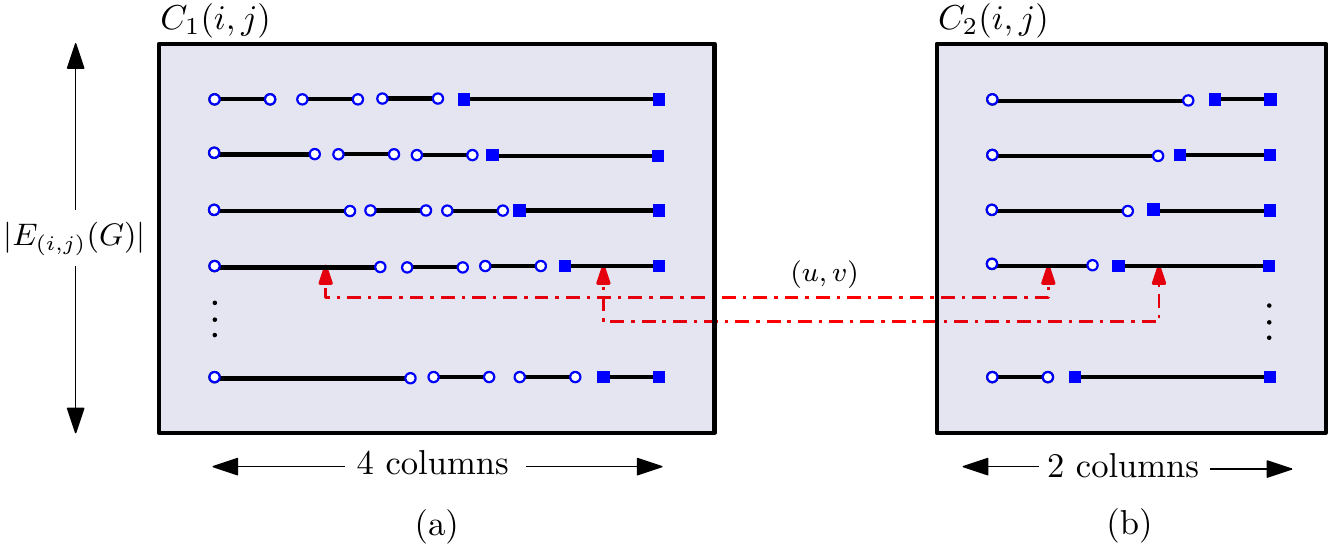}
	\caption{An illustration of the two components of an edge-selection gadget; namely, (a) $C_1(i,j)$ and (b) $C_2(i,j)$. The two 2-intervals $I_{\{u,v\}}$ and $I'_{\{u,v\}}$ corresponding to the edge $(u,v)\in E(G)$ are shown dashed-dotted red.}
	\label{fig:edgeGadget}
\end{figure}

Recall that for each edge in $E(G)$, we associate four 2-intervals; we next describe the construction of these 2-intervals. Let $(u,v)\in E(G)$ such that $c(u)=i$, $c(v)=j$ and $i<j$. Then, the 2-interval $I_{\{u,v\}}$ (resp., $I'_{\{u,v\}}$) is composed of the interval in the first column (resp., last column) of the row corresponding to $(u,v)$ in $C_1(i,j)$ and the first interval (resp., second interval) of the row corresponding to $(u,v)$ in $C_2(i,j)$. See Figure~\ref{fig:edgeGadget} for an illustration. The 2-interval $I_{(u,v)}$ (associated with the ``directed'' edge $(u,v)$) is composed of the interval in the second column of $C_1(i,j)$ and the interval in the vertex-selection gadget of $i$ that is in the row corresponding to vertex $u$ and the column for colour $j$. The 2-interval corresponding to the ``directed'' edge $(v,u)$ is constructed in a similar way: it consists of the interval in the third column of $C_1(i,j)$ and the interval in the vertex-selection gadget of $j$ that is in the row corresponding to vertex $v$ and the column for colour $i$. Figure~\ref{fig:vertexEdgeGadget} illustrates an example for constructing the two 2-intervals corresponding to such ``directed'' edges. Note that the latter two 2-intervals that correspond to ``directed'' edges are used for validation: they ensure that if the 2-intervals of a vertex $u$ with colour $i$ is selected, then all the selected edges with an endpoint of colour $i$ are incident to $u$.
\begin{lemma}
\label{lem:edgeEnsuring}
Let $S$ be a feasible solution for the \tip, and consider an edge-selection gadget $T$. If there are four 2-intervals in $S$ such that each of them has at least one interval in $T$, then all such four 2-intervals must have intervals from the same row of $T$.
\end{lemma}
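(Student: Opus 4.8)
The plan is to mirror the argument of Lemma~\ref{lem:vertexEnsuring}, using the same two-component structure that was deliberately built into the edge-selection gadget. First I would observe that $C_1(i,j)$ has exactly four columns and that the intervals within each column pairwise intersect; hence any feasible solution $S$ can contain at most one 2-interval with an interval in each column, so the four 2-intervals in $S$ that touch $T$ must use exactly one interval from each of the four columns of $C_1(i,j)$, and (by the definition of the 2-intervals $I_{\{u,v\}}, I'_{\{u,v\}}$) exactly one interval from each of the two columns of $C_2(i,j)$.

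Next I would pin down a reference row. Let the 2-interval of $S$ using the first column of $C_1(i,j)$ lie in row $r$; this interval is part of $I_{\{u,v\}}$ for the edge $(u,v)$ indexing row $r$, so its partner interval in the first column of $C_2(i,j)$ (also row $r$) is forced into $S$. Now look at the interval of $S$ in the second column of $C_2(i,j)$, say in row $r'$: the offset in $C_2(i,j)$ forces $r' \le r$, exactly as in the vertex gadget. That interval is the second piece of $I'_{\{u,v'\}}$ for the edge indexing row $r'$, which forces the interval of $S$ in the last (fourth) column of $C_1(i,j)$ to be in row $r'$ as well. Finally, the offsets among the four columns of $C_1(i,j)$ — an interval in column $t$ intersecting the interval in column $t+1$ one row above — mean that if $r' < r$ one cannot fit one interval per column of $C_1(i,j)$ into $S$ without creating an intersection; therefore $r' = r$, and then the same chasing shows the middle-column intervals (second and third columns of $C_1(i,j)$) also lie in row $r$.

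The main obstacle, as in Lemma~\ref{lem:vertexEnsuring}, is bookkeeping the offset constraints carefully so that the chain of forced choices genuinely collapses to a single row: one must check that the offsets are oriented consistently (column $t$ pushes ``up and to the right'') and that combining the $C_1$ offsets with the $C_2$ offset leaves no slack for two distinct rows. Everything else is a direct transcription of the vertex-gadget proof, so I would phrase the write-up to explicitly invoke the structural parallel and only spell out the one place where having four columns (rather than $k+1$) changes the counting.

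\begin{proof}
The component $C_1(i,j)$ has exactly four columns, and the intervals in any fixed column of $C_1(i,j)$ pairwise intersect; likewise the two intervals in each column of $C_2(i,j)$ share no feasible pair beyond a single choice. Hence a feasible solution $S$ can contain at most one 2-interval whose interval lies in a given column of $C_1(i,j)$. Since we assume there are four 2-intervals of $S$ each with an interval in $T$, these four 2-intervals must contribute exactly one interval to each of the four columns of $C_1(i,j)$, and therefore exactly one interval to each of the two columns of $C_2(i,j)$ (because the only valid 2-intervals using the first and last columns of $C_1(i,j)$ are $I_{\{u,v\}}$ and $I'_{\{u,v\}}$, whose other intervals sit in $C_2(i,j)$).

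Let the interval of $S$ in the first column of $C_1(i,j)$ lie in row $r$; it corresponds to an edge $(u,v)$ with $\{c(u),c(v)\}=\{i,j\}$, and it is one interval of the 2-interval $I_{\{u,v\}}$, so the other interval of $I_{\{u,v\}}$ — the interval in the first column of $C_2(i,j)$ and row $r$ — is also in $S$. Suppose the interval of $S$ in the second column of $C_2(i,j)$ lies in row $r'$. By the offset between the two columns of $C_2(i,j)$ we must have $r'\leq r$, since otherwise the first-column interval of $C_2(i,j)$ already selected would intersect it. That second-column interval of $C_2(i,j)$ is one interval of $I'_{\{u',v'\}}$ for the edge indexing row $r'$, so $S$ must also contain the interval in the last (fourth) column of $C_1(i,j)$ in row $r'$.

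It remains to rule out $r'<r$. The four columns of $C_1(i,j)$ carry an offset in which an interval in column $t$ intersects the interval in column $t+1$ that lies in the row immediately above it. If $r'<r$, then moving from the first-column interval (row $r$) to the last-column interval (row $r'$) while keeping exactly one interval per column of $C_1(i,j)$ is impossible without two of the selected intervals intersecting, contradicting $S$ being feasible. Hence $r'=r$. The same offset argument applied to the intervals of $S$ in the second and third columns of $C_1(i,j)$ shows that they too lie in row $r$. Therefore all four 2-intervals have their intervals in $T$ in the same row $r$ of $T$.
\end{proof}
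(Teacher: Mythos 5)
Your proof is correct and follows essentially the same route as the paper's: count one interval per column of $C_1(i,j)$, use the paired intervals in $C_2(i,j)$ and its offset to rule out the case where the last-column row lies on one side of the first-column row, and use the $C_1(i,j)$ offsets (no room for one interval per column across two distinct rows) to rule out the other case. The only difference is cosmetic: the paper pinpoints that in the remaining case one of the second- or third-column intervals must intersect the selected first- or last-column interval, whereas you state the same obstruction slightly more generically.
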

\begin{proof}
The proof uses an argument similar to the one we used in the proof of Lemma~\ref{lem:vertexEnsuring}. Suppose that $T$ corresponds to edges with colours $i$ and $j$, where $i<j$. Now, consider the gadget $C_1(i,j)$. Clearly, $S$ can have at most one interval from each column of $C_1(i,j)$. Since $S$ has four 2-intervals that have at least one interval in $T$, the set $S$ contains exactly one interval from each column of $T$. Suppose that the interval of the first column of $C_1(i,j)$ (that is in $S$) is at row $t$ for some $1\leq t\leq |E_{(i,j)}|$. Notice that this interval forms a 2-interval with the first interval in row $t$ of $C_2(i,j)$ and so that interval must also be in $S$ (these two intervals form a valid 2-interval and $S$ is a feasible solution). We now show that the interval of the last column of $C_1(i,j)$ (that is in $S$) must also be at row $t$. Suppose for the sake of contradiction that it is at a row $t'\neq t$. First, by construction, this interval forms a 2-interval with the second interval in row $t'$ of $C_2(i,j)$ and so that interval must also be in $S$. If $t'>t$, then the second interval in row $t'$ of $C_2(i,j)$ intersects with the first interval in row $t$ of $C_2(i,j)$ by the construction and so they cannot both be in $S$---a contradiction. Moreover, if $t'<t$, then $S$ cannot contain an interval from both the second and third columns of $C_1(i,j)$ because at least one of them intersects the interval of $S$ that is in either the first or the last column of $C_1(i,j)$---a contradiction. Therefore, $t'=t$ and so the two intervals in $S$ that are in $C_2(i,j)$ are also from the same row $t$. Finally, the fact that $t'=t$ forces the intervals in the second and third columns of $C_1(i,j)$ (that are in $S$) to be also from the row $t$.
\end{proof}

\begin{figure}[t]
	\includegraphics[width=\textwidth]{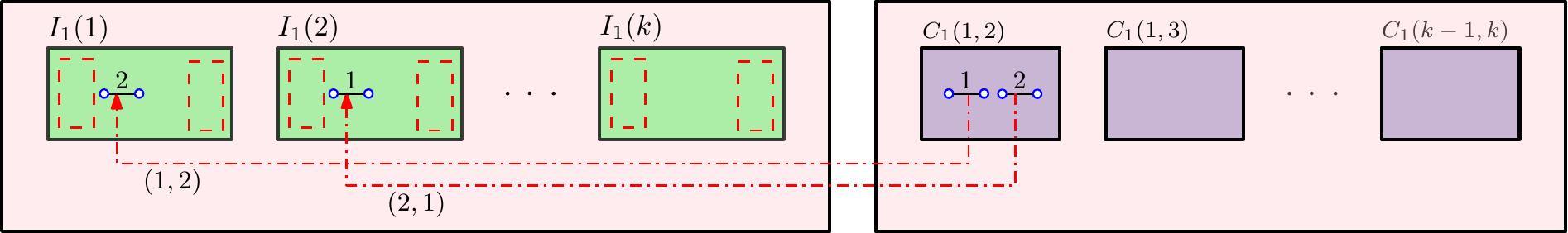}
	\caption{An illustration of the two 2-intervals corresponding to the ``directed'' edges $(u,v)$ and $(v,u)$, assuming $c(u)=1$ and $c(v)=2$. The dashed (red) rectangles shown in gadgets $I_1(\cdot)$ indicate the first and last columns of intervals in the gadget.}
	\label{fig:vertexEdgeGadget}
\end{figure}

By the above constructions, we obtain the set $\F$ of 2-intervals as
\[
\F=\{I_v, I'_v|v\in V(G)\}\cup\{I_{\{u,v\}},I'_{\{u,v\}},I_{(u,v)},I_{(v,u)}|(u,v)\in E(G)\}.
\]
Since we associate each vertex with two 2-intervals and each edge with four 2-intervals, we have $|\F|=2|V(G)|+4|E(G)|$. The construction of our gadgets can all be done in $\fpt$-time. In the following, we show the arrangement of the gadgets on the real line specific to each of $\R=\{\nst,\btw\}$ and $\R=\{<,\btw\}$. Then, we show that any $k$-multicoloured clique in $G$ corresponds to $2k+4{k\choose 2}$ pairwise disjoint 2-intervals of $\F$. For brevity, let $k'=2k+4{k\choose 2}$ for the rest of this section.

\subsection{Hardness for $\R=\{\nst,\btw\}$}
We now show how to arrange the gadgets on the real line when $\R=\{\nst,\btw\}$. To this end, consider the ordering $\{1,2,\dots,k\}$ of colours. We place the gadgets on disjoint regions of the real line from left to right as follows. First, for each pair of distinct colours $i$ and $j$, $1\leq i<j\leq k$, we place the gadget $C_2(i,j)$ on the line in this order; that is, we first place the gadgets $C_2(1,j)$ for all $j=2,\dots,k$, then the gadgets $C_2(2,j)$ for all $j=3,\dots,k$ and so on. Then, we place the gadgets $I_1(c)$ ($1\leq c\leq k$) from left to right in the increasing order of $c$. Next, we place the gadgets $C_1(i,j)$, $1\leq i<j\leq k$ in the same order as we placed their corresponding gadgets $C_2(i,j)$. Finally, we place the gadgets $I_2(c)$ ($1\leq c\leq k$) in the same order as we placed their corresponding gadgets $I_1(c)$. See Figure~\ref{fig:exampleCN} for an example. This forms our instance $(\F,\R, k')$ of the \tip, where $\R=\{\nst,\btw\}$ and $k'=2k+4{k\choose 2}$. Clearly, this arrangement can be done in $\fpt$-time. Moreover, one can verify that any two 2-intervals in this instance are $\R$-comparable, where $\R=\{\nst,\btw\}$.
\begin{lemma}
\label{lem:nestCross}
Graph $G$ has a $k$-multicoloured clique if and only if the \tip on $\F$ has a feasible solution of size $k'$ with respect to $\R=\{\nst,\btw\}$.
\end{lemma}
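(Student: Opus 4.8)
The plan is to establish the two directions separately, using Lemmas~\ref{lem:vertexEnsuring} and~\ref{lem:edgeEnsuring} together with a global counting argument.

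\textbf{From a clique to a solution.} Suppose $G$ has a $k$-multicoloured clique $K=\{v_1,\dots,v_k\}$ with $c(v_i)=i$, and for $1\leq i<j\leq k$ let $e_{ij}=\{v_i,v_j\}\in E(G)$ be the corresponding edge of $K$. I would take
\[
S=\{I_{v_i},I'_{v_i}\mid 1\leq i\leq k\}\cup\{I_{\{v_i,v_j\}},I'_{\{v_i,v_j\}},I_{(v_i,v_j)},I_{(v_j,v_i)}\mid 1\leq i<j\leq k\},
\]
which contains exactly $2k+4\binom{k}{2}=k'$ 2-intervals, and it then remains only to verify feasibility. For pairwise disjointness I would observe that all intervals of $S$ lying in the vertex-selection gadget of colour $i$ belong to the row of $v_i$, all intervals of $S$ in $C_1(i,j)$ and $C_2(i,j)$ belong to the row of $e_{ij}$, two intervals in the same row but distinct columns of a gadget are disjoint by the chosen offsets, and intervals in distinct gadgets are disjoint because the gadgets occupy disjoint parts of the line; pairwise $\R$-comparability is then immediate from the layout, since any two disjoint 2-intervals of the constructed instance are nested or crossing, as already noted.

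\textbf{From a solution to a clique.} Let $S$ be a feasible solution with $|S|=k'$. The heart of the argument is a tight count: the $k'$ pairwise-disjoint 2-intervals of $S$ contribute $2k'$ pairwise-disjoint intervals, and since all intervals in any fixed column of any gadget pairwise intersect, $S$ contains at most one interval per column. As the total number of columns over all gadgets is $k(k+1)+2k+4\binom{k}{2}+2\binom{k}{2}=4k^2=2k'$, the solution $S$ must contain exactly one interval from every column. Consequently, for each colour $c$ the $k+1$ columns of $I_1(c)$ are met by $k+1$ distinct 2-intervals of $S$ (no 2-interval has two intervals inside $I_1(c)$), so Lemma~\ref{lem:vertexEnsuring} applies and forces all intervals of $S$ inside the vertex-selection gadget of colour $c$ to lie in a single row; call the corresponding vertex $v_c\in V_c(G)$. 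Likewise, for each pair $i<j$ exactly four 2-intervals of $S$ meet the edge-selection gadget for $(i,j)$, and Lemma~\ref{lem:edgeEnsuring} forces them to lie in a single row, corresponding to an edge $e_{ij}\in E_{(i,j)}(G)$.

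Finally I would argue that $\{v_1,\dots,v_k\}$ is a $k$-multicoloured clique. Distinctness of colours holds by construction, so it suffices to show $v_iv_j\in E(G)$ for all $i<j$. By Lemma~\ref{lem:edgeEnsuring} the interval of $S$ in the second column of $C_1(i,j)$ lies in the row of $e_{ij}$, so the unique 2-interval of $S$ containing it is $I_{(u,v)}$ with $\{u,v\}=e_{ij}$ and $c(u)=i$, and its other interval lies in $I_1(i)$ in the row of $u$; but every interval of $S$ in $I_1(i)$ lies in the row of $v_i$, hence $u=v_i$. Symmetrically $I_{(v,u)}\in S$ has its other interval in the row of $v_j$ inside $I_1(j)$, so $v=v_j$, whence $e_{ij}=\{v_i,v_j\}\in E(G)$. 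The step I expect to be the main obstacle is pinning the column count down to exactly $2k'$ so that the ``one interval per column'' rigidity is forced, and then correctly threading the two structural lemmas together through the directed-edge 2-intervals; the offset-based disjointness checks and the $\R$-comparability of the arrangement are routine verifications of the construction.
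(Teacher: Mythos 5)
Your proposal is correct, and the forward direction coincides with the paper's. The backward direction, however, takes a genuinely different route in its counting step. The paper partitions $S$ into the 2-intervals touching vertex gadgets ($S_1$), touching edge gadgets ($S_2$), lying entirely in one vertex gadget ($S_3$), and entirely in one edge gadget ($S_4$), and then saturates bounds stepwise: $|S_2|\leq 4{k\choose 2}$ via the four columns of each $C_1(\cdot)$, at most two 2-intervals of $S_3$ per vertex gadget forcing $|S_3|=2k$, and finally $|S_1\setminus S_3|=|S_2\setminus S_4|=2{k\choose 2}$, which yields exactly four 2-intervals per edge gadget and $k+1$ per vertex gadget so that Lemmas~\ref{lem:edgeEnsuring} and~\ref{lem:vertexEnsuring} apply. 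You instead run a single global count: the $k'$ pairwise disjoint 2-intervals give $2k'$ disjoint intervals, each lying in one of the $k(k+1)+2k+4{k\choose 2}+2{k\choose 2}=4k^2=2k'$ columns, with at most one interval per column, so there is exactly one interval in every column; this immediately supplies the hypotheses of both structural lemmas in every gadget and avoids the paper's case bookkeeping, and your explicit tracing of the directed-edge 2-intervals to conclude $u=v_i$, $v=v_j$ is a welcome sharpening of the paper's terser final step. The one point you should make explicit is that your ``at most one interval per column'' bound is applied to \emph{all} columns, including those of $I_2(\cdot)$ and $C_2(\cdot)$: the paper asserts pairwise intersection within a column only for $I_1(\cdot)$ and $C_1(\cdot)$ (for $I_2$ and $C_2$ it describes only the cross-column offset, though the figures and the intended construction clearly have overlapping columns there as well), and the paper's own counting is arranged so that it never needs the $I_2$/$C_2$ columns; your argument does need them, so state that property of the construction before invoking it.
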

\begin{proof}
$(\Rightarrow)$ Suppose that $G$ has a $k$-multicoloured clique. For each colour $c$, let $v_c$ be the vertex in the clique with colour $c$. Then, for every colour $c$, we select the two 2-intervals $I_{v_c}$ and $I'_{v_c}$ from the vertex-selection gadget corresponding to $c$. Moreover, for every pair of colours $i$ and $j$ with $i<j$, let $(u_i,u_j)$ be the edge in the clique such that $c(u_i)=i$ and $c(u_j)=j$. Then, we select the four 2-intervals $I_{\{u_i,u_j\}}, I'_{\{u_i,u_j\}}, I_{(u_i,u_j)}$ and $I_{(u_j,u_i)}$. In this way, we have selected $k'$ 2-intervals in total. Moreover, by the arrangement of gadgets on the real line, one can verify that this set of $k'$ 2-intervals is $\R$-comparable.

\begin{figure}[t]
	\includegraphics[width=\textwidth]{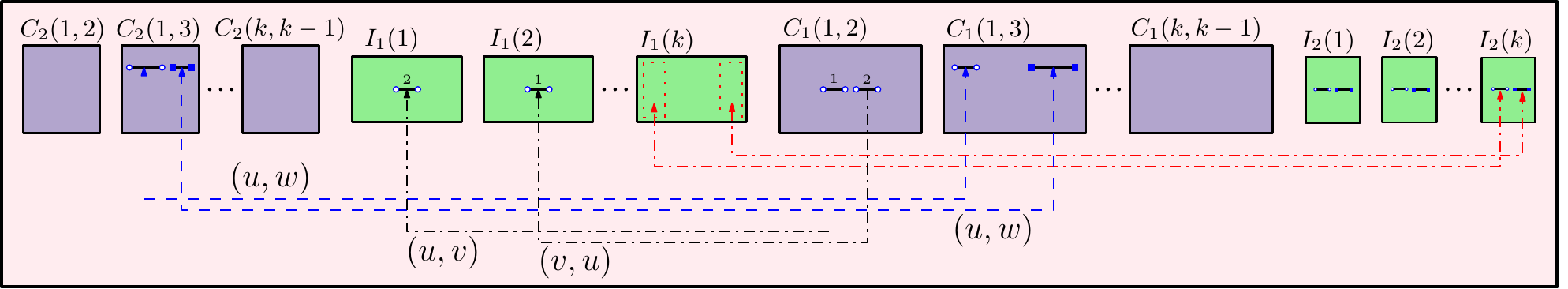}
	\caption{The arrangement of gadgets for $\R=\{\nst,\btw\}$.} 
	\label{fig:exampleCN}
\end{figure}

$(\Leftarrow)$ Consider a set $S$ of $k'$ 2-intervals that is a feasible solution for the \tip with respect to $\R=\{\nst,\btw\}$. First, observe that $S$ can have at most one interval from the first column of every vertex-selection gadget. We now show that it must contain at least one such interval from the first column of every vertex-selection gadget. Let $S_1\subseteq S$ (resp., $S_2\subseteq S$) be the set of 2-intervals such that each 2-interval in $S_1$ has at least one interval in a vertex-selection gadget (resp., an edge-selection gadget). Moreover, let $S_3\subseteq S$ (resp., $S_4\subseteq S$) be the set of 2-intervals such that each 2-interval in $S_3$ (resp., $S_4$) has exactly two intervals from the same vertex-selection gadget (resp., the same edge-selection gadget). Observe that $|S_2|\leq 4{k\choose 2}$ because the component $C_1(\cdot)$ of an edge-selection gadget has four columns and no two intervals in $S$ can come from the same column of any given $C_1(\cdot)$. This means that $|S_3|\geq 2k$. But, there are exactly $k$ vertex-selection gadgets and at most two 2-intervals of $S_3$ can be from the same vertex-selection gadget. Hence, $|S_3|=2k$ and so $|S_2|=4{k\choose 2}$. Since there are exactly ${k\choose 2}$ edge-selection gadgets, it follows that we have exactly four 2-intervals in $S$ that come from the same edge-selection gadget. By Lemma~\ref{lem:edgeEnsuring}, all the 2-intervals coming form the same edge-selection gadget lie in the same row of the gadget.

On the other hand, $|S_1\setminus S_3|\leq k(k-1)=2{k\choose 2}$ because we have $k$ vertex-selection gadgets, the component $I_1(\cdot)$ of any vertex-selection gadget has $k-1$ ``internal'' intervals, and at most one of such internal intervals (per column, per vertex-selection gadget) can be in $S$. Notice that a 2-interval has exactly one interval in a vertex-selection gadget if and only if it has exactly one interval in an edge-selection gadget. Therefore, $S_2\setminus S_4=S_1\setminus S_3$. Since $\{S_1,S_3,S_4\}$ (or, $S_2,S_3,S_4$) forms a partition of $S$, we must have $|S_2\setminus S_4|=|S_1\setminus S_3|=2{k\choose 2}$. That is, there are exactly $2{k\choose 2}$ 2-intervals that have exactly one interval in a vertex-selection gadget and the other interval in an edge-selection gadget. Notice that at most $k-1$ of such $2{k\choose 2}$ 2-intervals can come from the same vertex-selection gadget. Since there are $k$ vertex-selection gadgets, there are exactly $k-1$ of them from each vertex-selection gadget. This means that, for each vertex-selection gadget, there are $k+1$ 2-intervals in $S$ that come from this gadget. By Lemma~\ref{lem:vertexEnsuring}, these $k+1$ 2-intervals all come from the same row of the gadget. Hence, we select the $k$ vertices corresponding to these $k$ rows. We now claim that they are a feasible solution for the $k$-multicoloured clique. Clearly, each selected vertex has a unique colour. Moreover, take any colour $c$ and let $u$ be the vertex that we selected with colour $c$. Recall that all the intervals of $S$ that come from the vertex-selection gadget $c$ are in the same row as that of $u$. There are $k-1$ of them (excluding those corresponding to $u$ itself) and each is paired with an interval in an edge-selection gadget corresponding to the pair $(c,c')$ of colours, for all colours $c'\neq c$. Therefore, there exists an edge between $u$ and every other selected vertex and so the $k$ selected vertices are indeed a feasible solution for the $k$-multicoloured clique.
\end{proof}

By Lemma~\ref{lem:nestCross}, we have the following theorem.
\begin{theorem}
The \tip is $\W$-hard when $\R=\{\nst,\btw\}$.
\end{theorem}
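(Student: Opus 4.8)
The plan is to assemble the pieces already in place into a parameterized reduction from $k$-Multicoloured Clique. Since $k$-Multicoloured Clique is $\W$-hard, it suffices to exhibit an $\fpt$-time reduction that maps an instance $(G,c,k)$ of that problem to an instance of the \tip with $\R=\{\nst,\btw\}$ whose target solution size is bounded by a function of $k$ alone.

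First I would take the set $\F$ of 2-intervals constructed above from $(G,c,k)$, together with the left-to-right arrangement of the vertex-selection and edge-selection gadgets described for $\R=\{\nst,\btw\}$, and set $k'=2k+4{k\choose 2}$. Two things need to be checked. (i) The instance $(\F,\R,k')$ can be produced in time $f(k)\cdot |G|^{O(1)}$; this is immediate since $|\F|=2|V(G)|+4|E(G)|$ and each gadget is constructed and placed by a routine procedure, as already observed. (ii) The new parameter $k'$ depends only on $k$, which is clear. Hence the map $(G,c,k)\mapsto(\F,\{\nst,\btw\},k')$ is a legitimate parameterized reduction.

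The correctness of the reduction is precisely Lemma~\ref{lem:nestCross}: $G$ has a $k$-multicoloured clique if and only if $\F$ admits a feasible solution of size $k'$ for the \tip with $\R=\{\nst,\btw\}$. Combining this equivalence with the $\W$-hardness of $k$-Multicoloured Clique yields that the \tip with $\R=\{\nst,\btw\}$, parameterized by the size of an optimal solution, is $\W$-hard, which is the theorem.

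I do not expect a genuine obstacle at this stage, because the substantive work has already been done: Lemmas~\ref{lem:vertexEnsuring} and~\ref{lem:edgeEnsuring} show that the gadgets force a one-vertex-per-colour and one-edge-per-colour-pair selection, and the arrangement for $\R=\{\nst,\btw\}$ keeps every pair of 2-intervals $\{\nst,\btw\}$-comparable. The only point worth a second look is bookkeeping: confirming that $k'=2k+4{k\choose 2}$ is exactly the size of the solution extracted from a $k$-clique (two 2-intervals per vertex-selection gadget and four per edge-selection gadget), and that no 2-interval is counted in two gadgets, so that the ``if'' and ``only if'' directions of Lemma~\ref{lem:nestCross} refer to the same threshold.
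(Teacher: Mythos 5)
Your proposal matches the paper's argument: the theorem is obtained exactly as you describe, by combining Lemma~\ref{lem:nestCross} with the facts that the construction and arrangement are computable in $\fpt$-time and that $k'=2k+4{k\choose 2}$ depends only on $k$, so the map is a parameterized reduction from the $\W$-hard $k$-Multicoloured Clique problem. No gaps; this is essentially the same proof.
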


\subsection{Hardness for $\R=\{<,\btw\}$}
We now show that the \tip is $\W$-hard even when $\R=\{<,\btw\}$. To this end, we show how to arrange the gadgets on the real line such that any pair of two 2-intervals are $\{<,\btw\}$-comparable. Then, one can prove a result similar to Lemma~\ref{lem:nestCross} for $\R=\{<,\btw\}$, concluding that the problem is $\W$-hard even for $\R=\{<,\btw\}$. Here, we only show the arrangement.

\begin{figure}[t]
	\includegraphics[width=\textwidth]{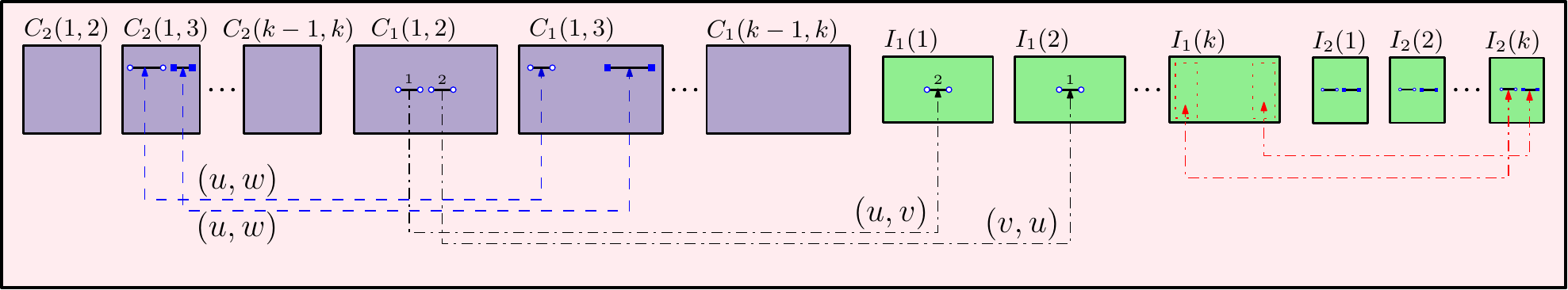}
	\caption{The arrangement of gadgets for $\R=\{<,\btw\}$.}
	\label{fig:exampleCP}
\end{figure}

Consider the ordering $\{1,2,\dots,k\}$ of colours. We place the gadgets on disjoint regions of the real line from left to right as follows. First, for each pair of distinct colours $i$ and $j$, $1\leq i<j\leq k$, we place the gadget $C_2(i,j)$ on the line in this order; that is, we first place the gadgets $C_2(1,j)$ for all $j=2,\dots,k$, then the gadgets $C_2(2,j)$ for all $j=3,\dots,k$ and so on. Then, we place the gadgets $C_1(i,j)$, $1\leq i<j\leq k$ in the same order as we placed their corresponding gadgets $C_2(i,j)$. Next, we place the gadgets $I_1(c)$ ($1\leq c\leq k$) from left to right in the increasing order of $c$. Finally, we place the gadgets $I_2(c)$ ($1\leq c\leq k$) in the same order as we placed their corresponding gadgets $I_1(c)$. See Figure~\ref{fig:exampleCP} for an example. This forms our instance $(\F,\R, k')$ of the \tip, where $\R=\{<,\btw\}$ and $k'=2k+4{k\choose 2}$. Clearly, this arrangement can be done in $\fpt$-time and one can verify that every of pair of 2-intervals are $\{<,\btw\}$-comparable.
\begin{theorem}
The \tip is $\W$-hard when $\R=\{<,\btw\}$.
\end{theorem}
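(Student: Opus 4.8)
The plan is to mimic, almost verbatim, the treatment of Lemma~\ref{lem:nestCross}, now with the four families of blocks placed left to right in the order just described: the $C_2(\cdot)$'s, then the $C_1(\cdot)$'s, then the $I_1(\cdot)$'s, then the $I_2(\cdot)$'s. This construction is computable in $\fpt$-time, so the work reduces to establishing the analogue of Lemma~\ref{lem:nestCross}: $G$ has a $k$-multicoloured clique if and only if the \tip instance $(\F,\R,k')$ with $\R=\{<,\btw\}$ and $k'=2k+4\binom{k}{2}$ has a feasible solution of size $k'$. I would split this into three parts: the arrangement property, the forward implication, and the backward implication.

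For the arrangement property I would show that \emph{no two disjoint 2-intervals of $\F$ are nested}, so that any feasible solution (whose 2-intervals are pairwise disjoint) is automatically $\{<,\btw\}$-comparable. Partition the line into four consecutive regions $R_1\prec R_2\prec R_3\prec R_4$ holding the $C_2(\cdot)$'s, the $C_1(\cdot)$'s, the $I_1(\cdot)$'s, and the $I_2(\cdot)$'s. By construction each 2-interval of $\F$ has its left interval in some $R_t$ and its right interval in the next region $R_{t+1}$: an undirected-edge 2-interval spans $[R_1,R_2]$, a directed-edge 2-interval spans $[R_2,R_3]$, and a vertex 2-interval spans $[R_3,R_4]$. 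Given two disjoint 2-intervals $D=(A,B)$ and $D'=(A',B')$ with spans $[R_t,R_{t+1}]$ and $[R_s,R_{s+1}]$, $t\le s$, I would argue by cases: if $s\ge t+2$ then $B$ lies left of $A'$ and $D<D'$; if $s=t+1$ then $A$ precedes all of $A',B,B'$ while $B'$ follows all of $A,A',B$, and since $A'$ and $B$ lie in the common region $R_{t+1}$ and are disjoint, either $B<A'$ (giving $D<D'$) or $A'<B$ (giving $D\btw D'$); and if $s=t$ then $A,A'\in R_t$, $B,B'\in R_{t+1}$, and the pair is non-nested precisely when the left-to-right order of $A,A'$ inside $R_t$ matches that of $B,B'$ inside $R_{t+1}$. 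The whole subtlety lives in this last case: I would fix the internal column orders inside the gadgets---especially inside $C_1(\cdot)$ and $I_1(\cdot)$, each of which carries endpoints of two different span types---and the positions of the per-colour and per-colour-pair blocks so that these paired orders always agree, and then verify agreement by a finite inspection of the three span types. I expect this to be the main obstacle: because each edge-selection gadget $C_1(i,j)$ is linked to the two \emph{distinct} vertex-selection gadgets $I_1(i)$ and $I_1(j)$, the internal column orders and the block order cannot be chosen independently, and exhibiting a globally consistent choice (refining the layout if a monolithic block ordering does not suffice) is the delicate point.

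Granting the arrangement property, the forward implication is short: from a $k$-multicoloured clique I would select exactly the $k'$ 2-intervals used in the proof of Lemma~\ref{lem:nestCross}, namely $I_{v_c},I'_{v_c}$ for each colour $c$ and $I_{\{u_i,u_j\}},I'_{\{u_i,u_j\}},I_{(u_i,u_j)},I_{(u_j,u_i)}$ for each colour pair $i<j$; these are pairwise disjoint by construction, hence pairwise $\{<,\btw\}$-comparable. For the backward implication I would reuse the counting argument of Lemma~\ref{lem:nestCross} unchanged---bounding $|S_2|\le 4\binom{k}{2}$, deriving $|S_3|=2k$, then $|S_2|=4\binom{k}{2}$ and $|S_1\setminus S_3|=|S_2\setminus S_4|=2\binom{k}{2}$, and concluding that any size-$k'$ feasible solution $S$ meets each vertex-selection gadget in exactly $k+1$ 2-intervals and each edge-selection gadget in exactly four---since this part uses only the within-gadget column structure together with Lemmas~\ref{lem:vertexEnsuring} and \ref{lem:edgeEnsuring} and is insensitive to the global layout. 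The clique is then extracted exactly as before (one vertex per vertex-selection gadget, adjacent to every other selected vertex through the $k-1$ directed-edge 2-intervals of $S$ meeting that gadget), which completes the equivalence and hence the $\W$-hardness of the \tip for $\R=\{<,\btw\}$.
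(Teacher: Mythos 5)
Your route is the paper's route: the same four-block layout (the $C_2(\cdot)$'s, then the $C_1(\cdot)$'s, then the $I_1(\cdot)$'s, then the $I_2(\cdot)$'s), the same intended solution in the forward direction, and a verbatim reuse of the counting argument of Lemma~\ref{lem:nestCross} together with Lemmas~\ref{lem:vertexEnsuring} and~\ref{lem:edgeEnsuring} in the backward direction. The paper itself only states the arrangement, asserts that every pair of 2-intervals is $\{<,\btw\}$-comparable, and leaves the analogue of Lemma~\ref{lem:nestCross} to the reader, so in structure you are doing exactly what the paper does, and your observation that the backward count uses only disjointness and within-gadget structure is correct.

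The genuine gap is precisely the step you flag and then defer: the comparability of the layout, which is needed for the forward direction. For the monolithic block ordering you describe (and which the paper's prose describes), the ``finite inspection'' in your case $s=t$ fails for the directed-edge 2-intervals: the $C_1(i,j)$'s are ordered by colour pairs, while the two directed 2-intervals of $C_1(i,j)$ point into $I_1(i)$ and $I_1(j)$, which are ordered by single colours, and no block order of the $C_1(\cdot)$'s and $I_1(\cdot)$'s makes these two orders agree for all disjoint pairs. Concretely, for $k\geq 3$ the intended clique solution contains $I_{(u_1,u_3)}$ (left interval in $C_1(1,3)$, right interval in $I_1(1)$) and $I_{(u_2,u_1)}$ (left interval in $C_1(1,2)$, right interval in $I_1(2)$); these are disjoint, and writing $I_{(u_1,u_3)}=(A,B)$, $I_{(u_2,u_1)}=(A',B')$, the block layout forces $A'<A<B<B'$, i.e., a nested pair inside the very solution you must certify as $\{<,\btw\}$-comparable. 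So the ``refinement of the layout if a monolithic block ordering does not suffice'' that you mention in passing is not optional but the crux of the theorem: one must exhibit an explicit interleaving (or re-placement of the intervals targeted by the directed-edge 2-intervals) under which all pairs arising in the argument are preceding or crossing, and then redo the comparability check for that arrangement. Your proposal is no less detailed than the paper's written text at this point (the paper says ``one can verify'' and points to its figure), but as a standalone proof it leaves the central geometric verification open, and for the arrangement as literally stated that verification would not succeed.
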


\section{Conclusion}
\label{sec:conclusion}
In this paper, we showed that the \tip is $\W$-hard when $\R=\{\nst,\btw\}$ and $\R=\{<,\btw\}$; hence, fully settling the parameterized complexity of the problem when parameterized by the size of an optimal solution. It would be interesting to examine $\fpt$-algorithms with respect to other parameters such as the maximum number of pairwise intersecting $2$-intervals.

\bibliographystyle{plain}
\bibliography{ref}

\end{document}